\newtheorem{remark}{Remark}
\newtheorem{theorem}{Theorem}
\newtheorem{lemma}{Lemma}
\newtheorem{corollary}{Corollary}
\newtheorem{assumption}{Assumption}
\newtheorem{definition}{Definition}
\begin{document}

\title{\vspace{15pt} 
Resistant Topology Inference in Consensus Networks: A Feedback-Based Design}
\author{Yushan Li, Jiabao He, and Dimos V. Dimarogonas
\thanks{This work was supported by the Knut and Alice Wallenberg (KAW) Foundation, and the Swedish Research
Council (VR). 
The work of Yushan Li was also supported by the Outstanding Ph.D. Graduate Development Scholarship from Shanghai Jiao Tong University.}
\thanks{The authors are with the Division of Decision and Control Systems, KTH Royal Institute of Technology, Stockholm, Sweden. Email: \{yushanl,jiabaoh,dimos\}@kth.se. }
}

\maketitle

\begin{abstract}
Consensus networks are widely deployed in numerous civil and industrial applications. 
However, the process of reaching a common consensus among nodes can unintentionally reveal the network’s topology to external observers by appropriate inference techniques. 
This paper investigates a feedback-based resistant inference design to prevent the topology from being inferred using data, while preserving the original consensus convergence. 
First, we characterize the conditions to preserve the original consensus, and introduce the \textit{“accurate inference”} notion, which accounts for both the uniqueness of the solution to topology inference (solvability) and the deviation from the original topology (accuracy). 
Then, we employ invariant subspace analysis to characterize the solvability. 
Even when unique inference remains possible, we provide necessary and sufficient conditions
for the feedback design to induce inaccurate inference, and give a Laplacian structure based distributed design. 
Simulations validate the effectiveness of the method. 
\end{abstract}

\IEEEpeerreviewmaketitle

\section{Introduction}

In the last decades, consensus networks are increasingly used to coordinate distributed systems such as power grids, social networks, and sensor networks, etc \cite{olfati2007consensus,10552333,wang2024distributed,10918814}. 
In these applications, the topology structure specifies the exchange of information among nodes to reach a common agreement, which is critical for system stability and convergence. 
However, the process of reaching consensus also faces a great risk of exposing the topology, which can be inferred by an external observer using collected data \cite{brugere2018network}. 
For instance, one can leverage the published data in social networks to infer a user's friendship network \cite{lu2020privacy}, and use the node trajectories to learn the topology of a multi-robot network in flocking tasks \cite{sebastian2023learning}. 
Such inference could lead to privacy breaches or even system vulnerabilities if the topology is exploited. 
Hence, developing resistant methods that can prevent the topology from being inferred is of great importance.

In the literature, topology inference (also called topology/network identification or reconstruction) has been a wildly investigated topic. 
The related works can be roughly cast into two categories according to the network system (NS) type: stochastic or deterministic. 
Typical methods for the former include graph signal processing \cite{zhu2020network}, power spectral analysis \cite{shahrampour2014topology}, and vector autoregressive analysis \cite{zaman2021online}, to name a few. 
These works mainly focus on the asymptotic inference performance with sufficient measurements on the NS. 
For deterministic networks, the efforts are mostly devoted to investigating the topology identifiability \cite{hayden2016network,shi2023single,sun2024identifiability} (i.e., whether the topology can be inferred given some data). 
For example, \cite{nabi-abdolyousefi2012networka} proposes a node knock-out method to reconstruct an undirected network. 
The work \cite{10337619} uses an edge-agreement-based framework to identify the topology of a class of nonlinear NSs. 
The identifiability conditions for general linear heterogeneous networks are analyzed in \cite{vanwaarde2021topology}. 

Compared with the topology inference research, how to construct its resistant design in a reverse way is relatively less studied. 
A first way to meet this goal is to introduce random disturbances in the iteration process. 
For example, the works \cite{katewa2020differential,Hawkins2024node} use differential privacy to guide the noise design when nodal data is published. 
However, their focus lies in the privacy of topology spectra, not the topology matrix itself. 
The work \cite{li2025Preserving} further develops a noise design for preserving the topology matrix, while without harming the converging state in the mean square sense. 
A common feature of these works is to introduce randomness into the original deterministic systems, thus obfuscating the topology for external observers. 
The tradeoff between topology preservation and control performance still remains an open issue. 
In fact, the topology identifiability analysis for deterministic networks can provide another perspective for the resistant design by enforcing the topology unidentifiability. 
To the best of our knowledge, this direction is far less discussed in the literature. 
A recent work \cite{mao2025unidentifiability} gives a novel unidentifiability notion based on the Fisher information matrix and presents a low-rank controller design to meet the unidentifiability for liner systems. 
However, this design also comes with a sacrifice of nominal control performance, e.g., state convergence.

In this paper, we focus on the resistant topology inference by feedback control design in deterministic consensus networks. 
This problem is challenging because i) the introduced control should well tackle the tradeoff between the resistant inference goal and the consensus preservation, 
and ii) the feedback needs to align with the distributed nature (or sparsity pattern) of the network topology. 

The contributions of this work are threefold. 
First, we obtain the conditions to preserve the consensus state when the feedback input is involved. 
Accordingly, we introduce \textit{``accurate inference''}, which takes the uniqueness of the inferred solution (solvability) and its deviation from the underlying topology (accuracy) into account, to evaluate a topology estimator.  
Second, we provide an invariant subspace based perspective to analyze the solvability of a nominal topology inference problem. 
Third, even if the solvability of a topology estimator is still preserved, we establish necessary and sufficient conditions for the feedback design to incur inaccurate inference. 
We further resemble the Laplacian structure to propose a fully distributed feedback design, while without disturbing the consensus convergence. 
Numerical experiments are conducted to corroborate the proposed method. 

The remainder of this paper is organized as follows. 
In Section \ref{sec:Preliminaries}, some preliminaries and the problem of interest are presented. 
The conditions and design for resistant topology inference are given in Section \ref{sec:results}. 
Numerical results are shown in Section \ref{sec:simulation}.
Finally, Section \ref{sec:conclusion} concludes the paper.

\section{Preliminaries and Problem of Formulation}\label{sec:Preliminaries}

\subsection{Graph Basics and Notations}
Consider a NS described by a digraph $\mathcal{G}=(\mathcal{V},\mathcal{E})$, where the vertex set $\mathcal{V}=\{1, \cdots, n\}$ enumerates $n$ interacting nodes, and the edge set $\mathcal{E} \subset \mathcal{V} \times \mathcal{V}$ characterizes directional edges. 
The presence of an edge $(i,j) \in \mathcal{E}$ signifies that node $i$ receives data from node $j$. 
The topology matrix of $\mathcal{G}$ is characterized by $W = [w_{ij}] \in \mathbb{R}^{n \times n}$, where $w_{ij} > 0$ denotes $(i,j) \in \mathcal{E}$, and $w_{ij} = 0$ otherwise. 
For node $i$, its incoming neighbors are defined as $\mathcal{N}_i = \{ j \in \mathcal{V} : (i,j) \in \mathcal{E} \}$. 

In this paper, we use $\otimes$ to denote the Kronecker product operator. 
Let $\bm{1}$ and $I$ be the all-one vector and identity matrix in compatible dimensions, respectively. 
The notation $\operatorname{rank}(\cdot)$ and $(\cdot)^\intercal$ denote the rank and transpose of a matrix, respectively. 
Given a series of vectors $v_1,\cdots,v_n$, denote $\operatorname{span}\{v_1,\cdots,v_n\}$ as the subspace spanned by these vectors, 
and let $\operatorname{dim}(\operatorname{span}\{v_1,\cdots,v_n\})$ be the number of independent vectors that span the subspace. 

\subsection{Network System and Topology Inference Model}
For distinction, we use the superscript $\cdot^*$ to denote the nominal NS without extra feedback, which is modeled as 
\begin{equation}\label{eq:local_model}
\begin{aligned}
x_{t+1}^{*,i} =  W_{ii} x_t^{*,i} + \sum\nolimits_{j\in \mathcal{N}_{i}} W_{ij} x_t^{*,j},
\end{aligned}
\end{equation}
where $x_{t}^{*,i}$ is the $i$-th node's state at time $t\in\{0,1,2,\cdots\}$. 
For the topology $W$, the following assumption is made. 
\begin{assumption}\label{assu:topo}
The topology matrix $W$ is row-stochastic, and $1$ is a simple eigenvalue while all other eigenvalues have modulus less than $1$.
\end{assumption}

By Assumption \ref{assu:topo}, the states of all nodes of \eqref{eq:local_model} will reach the following consensus point as $t\to\infty$ \cite[Theorem 5.1]{FB-LNS}
\begin{equation}
x^*_{\infty}=\lim_{t\to\infty} x_{t}^* = \lim_{t\to\infty} W^t x_0^* = (w^\intercal x_0^*) \bm{1},
\end{equation}
where $x_{t}^*=[x_{t}^{*,1},\cdots,x_{t}^{*,n}]^\intercal \in\mathbb{R}^n$ is the global system state, and $w\in\mathbb{R}^{n}$ is the normalized left eigenvector of the eigenvalue $1$ ($w^\intercal \bm{1}=1$). 

Suppose that an external observer can access consecutive $T\ge n$ steps of the NS'states, and the states satisfy 
\begin{equation}
    X_b^*=W X_a^*,
\end{equation}
where $X_a^*=[x_{0}^*,x_{1}^*,\cdots,x_{T-1}^*]$ and $X_b^*=[x_{1}^*,x_{2}^*,\cdots,x_{T}^*]$
Then, the topology $W$ can be directly inferred by the following ordinary least-squares (OLS) estimator
\begin{equation}\label{eq:old_estimator}
\hat{W}^*= X_b^* (X_a^*)^{\dag},
\end{equation} 
where $(\cdot)^\dag$ represents the pseudo-inverse of a matrix.

\subsection{Problem of Interest}
Considering that the topology inference of an external observer could lead to privacy breaches, this work aims to provide a feedback design to prevent the topology from being inferred. 
To address this problem, we use a state feedback control design for the state update process, given by
\begin{equation}\label{eq:local_feedback}
\left\{\begin{aligned}
u^{i}_t &=\sum\nolimits_{j\in\mathcal{N}_i} K_{ij} x^{j}_t, \\
x_{t+1}^{i} &=  W_{ii} x_t^{i} + \sum\nolimits_{j\in \mathcal{N}_{i}} W_{ij} x_t^{j} + u^{i}_t,
\end{aligned}\right.
\end{equation}
where $x_0=x_0^*$ and $K=[K_{ij}]_{i,j=1}^{n}\in\mathbb{R}^{n\times n}$ is the feedback matrix. 
The above design can be regarded as adding an extra input to the nominal model \eqref{eq:local_model}. 
Notice that the input $u^{i}_t$ needs to be well designed to preserve the consensus point of the original NS \eqref{eq:local_model}, which is crucial for the reliability of consensus algorithms in NSs. 
Then, the global form of \eqref{eq:local_feedback} is written as
\begin{equation}\label{eq:global_feedback}
x_{t+1}= (W+K) x_t \triangleq \tilde{W} x_t. 
\end{equation}
Then, the external observer will have the following OLS estimator based on the collected $T$ steps of states
\begin{equation}\label{eq:W_estimator}
\hat{W}= X_b (X_a)^{\dag},
\end{equation} 
where $X_a=[x_{0},x_{1},\cdots,x_{T-1}]$ and $X_b=[x_{1},x_{2},\cdots,x_{T}]$. 
To formally describe the estimated topology, we introduce the following definition. 
\begin{definition}
The topology estimator $\hat{W}$ is accurate if i) \eqref{eq:W_estimator} has a unique solution, i.e., $\operatorname{rank}(X_a)=n $, and ii) the estimator has no deviation with $W$, i.e., $\|\hat{W}-W\|=0$. 
\end{definition}

Based on the above formulation, our problem of interest can be cast as below
\begin{subequations}\label{eq:prob}
\begin{align}
\text{Find} &~~K\\
\text{s.t.} & ~~\hat{W}\neq W,\label{eq:requirement1} \\
&~~\lim_{t\to\infty} x_{t}= x^*_{\infty} = (w^\intercal x_0^*) \bm{1}, \label{eq:converge_state} \\
&~~K_{ij}=0~\text{if}~W_{ij}=0~(i\neq j). \label{eq:K_sparse}
\end{align}
\end{subequations}
The necessities of the above constraints are as follows. 
\begin{itemize}
    \item First, \eqref{eq:requirement1} describes the core goal to prevent the underlying topology from being inferred. 
    \item Second, \eqref{eq:converge_state} requires that the consensus point of the original NS \eqref{eq:local_model} needs to be preserved. 
    \item Third, the sparsity constraint \eqref{eq:K_sparse} is enforced due to the distributed nature of a consensus network, where each node can only access information from its neighbors. 
\end{itemize}

To tackle problem \eqref{eq:prob}, our main idea is to exploit conditions under which an accurate inference will be impossible. The proposed methods fall into two categories: i) let \eqref{eq:W_estimator} have no unique solution (degenerate the solvability), or ii) make the solution deviated with $W$ (inaccurate inference) even if a unique solution for \eqref{eq:W_estimator} is available. 

\section{Main Results}\label{sec:results}


\subsection{Conditions for Consensus Preservation}

First, we show how the consensus preservation condition \eqref{eq:converge_state} can be characterized in terms of $K$. 
Hereafter, we denote $\lambda_i$ and $\mu_i$ as $i$-th eigenvalue of $W$ and $\tilde{W}$, respectively. 
\begin{lemma}\label{le:K_conditions}
Under Assumption \ref{assu:topo}, the consensus preservation condition \eqref{eq:converge_state} can be achieved if and only if $K$ satisfies
\begin{align}
    &K \bm{1}=0, \label{eq:K_c1}\\
    &w^\intercal K=0,\label{eq:K_c2}\\
    &|\mu_i|<1,~i=2,\cdots,n.  \label{eq:K_c3}
\end{align}
\end{lemma}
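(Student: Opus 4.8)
The plan is to reduce the consensus-preservation requirement to a statement about the limit of the powers $\tilde W^t$, and then read off the three conditions from the spectral structure of $\tilde W = W + K$. Reading \eqref{eq:converge_state} as holding for an arbitrary initial state $x_0 = x_0^*$ (the natural requirement, since the network's initial state is not controlled), and using $x_t = \tilde W^t x_0$, the requirement becomes equivalent to
\[
\lim_{t\to\infty}\tilde W^t = \bm 1 w^\intercal =: P,
\]
because $(w^\intercal x_0^*)\bm 1 = \bm 1 w^\intercal x_0^*$ and $x_0^*$ ranges over all of $\mathbb{R}^n$. Note that $P$ is a rank-one idempotent since $w^\intercal\bm 1 = 1$. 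I would first record the standard spectral fact (via the Jordan form) that $\lim_{t\to\infty}\tilde W^t$ exists and is nonzero if and only if $1$ is a semisimple eigenvalue of $\tilde W$ and every other eigenvalue has modulus strictly less than $1$; in that case the limit equals the spectral projection onto the eigenspace of the eigenvalue $1$.

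For sufficiency, assume \eqref{eq:K_c1}--\eqref{eq:K_c3}. From $W\bm 1 = \bm 1$ (row-stochasticity) and \eqref{eq:K_c1}, $\tilde W\bm 1 = \bm 1$, so $\bm 1$ is a right eigenvector of $\tilde W$ for the eigenvalue $1$; from $w^\intercal W = w^\intercal$ and \eqref{eq:K_c2}, $w^\intercal\tilde W = w^\intercal$, so $w$ is the corresponding left eigenvector. Condition \eqref{eq:K_c3} makes the eigenvalue $1$ simple and all others strictly inside the unit disk, so the recorded fact gives that the limit equals the normalized spectral projection $\bm 1 w^\intercal/(w^\intercal\bm 1) = \bm 1 w^\intercal = P$. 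Hence $x_t \to (w^\intercal x_0^*)\bm 1$, establishing \eqref{eq:converge_state}.

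For necessity, assume $\lim_{t\to\infty}\tilde W^t = P$. Passing to the limit in $\tilde W^{t+1} = \tilde W\,\tilde W^t = \tilde W^t\,\tilde W$ yields $\tilde W P = P = P\tilde W$, i.e.\ $\tilde W\bm 1 w^\intercal = \bm 1 w^\intercal$ and $\bm 1 w^\intercal\tilde W = \bm 1 w^\intercal$. Since $w\neq 0$ and $\bm 1\neq 0$, cancelling the nonzero outer factor gives $\tilde W\bm 1 = \bm 1$ and $w^\intercal\tilde W = w^\intercal$, and subtracting the identities $W\bm 1 = \bm 1$, $w^\intercal W = w^\intercal$ produces exactly \eqref{eq:K_c1} and \eqref{eq:K_c2}. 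Finally, because $P$ is nonzero with $\operatorname{rank}(P)=1$, the recorded fact forces $1$ to be a semisimple eigenvalue whose multiplicity equals $\operatorname{rank}(P)=1$ and every remaining eigenvalue to have modulus strictly less than $1$, which is precisely \eqref{eq:K_c3}.

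The main obstacle I anticipate is the necessity argument for \eqref{eq:K_c3}: one must rule out both nontrivial Jordan blocks at the eigenvalue $1$ and any eigenvalue of modulus $1$ other than $1$ itself, and then tie the rank of the limiting projection to the multiplicity of the eigenvalue $1$. This is where the Jordan-form (or spectral-decomposition) analysis of $\tilde W^t$ does the real work; by contrast, the extraction of \eqref{eq:K_c1}--\eqref{eq:K_c2} by cancelling $\bm 1$ and $w$ is comparatively routine.
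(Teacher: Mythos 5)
Your proof is correct and follows essentially the same route as the paper's: both extract \eqref{eq:K_c1}--\eqref{eq:K_c2} from the requirement that $\bm 1$ and $w$ remain right and left eigenvectors of $\tilde W$ at the eigenvalue $1$, and both obtain \eqref{eq:K_c3} from the Jordan-form characterization of when $\tilde W^t$ converges. Your necessity step (passing to the limit in $\tilde W^{t+1}=\tilde W\,\tilde W^t$ and reading off the multiplicity from the rank of the limiting projection) is a more carefully justified version of the paper's informal fixed-point argument, but the underlying idea is identical.
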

\begin{proof}
\textit{Necessity}: 
First, if \eqref{eq:converge_state} holds, the system state will no longer change when it reaches $x^*_{\infty}$. 
Hence, we have
\begin{equation}
   \tilde{W}x^*_{\infty}= (W+K)x^*_{\infty}=x^*_{\infty} \Rightarrow K \bm{1}=0, 
\end{equation}
which indicates that $\tilde{W}$ has an eigenvalue $1$ (i.e., $\mu_1=1$). 
Meanwhile, recall that $x^*_{\infty}$ is determined by the left vector $w$ corresponding to $\lambda_1=1$ in $W$. 
To ensure that $w$ remains a left eigenvector of $\tilde{W}$ with eigenvalue $\mu_1=1$, we need 
\begin{equation}
w^\intercal \tilde{W}=w^\intercal \Rightarrow w^\intercal K=0. 
\end{equation}
Finally, since $\tilde{W}^t$ is convergent, it indicates that the eigenvalues of $\tilde{W}$ except 1 need to be within the unit circle, hence yielding \eqref{eq:K_c3}. 
The necessity is completed. 

\textit{Sufficiency}: If the conditions \eqref{eq:K_c1}, \eqref{eq:K_c2}, and \eqref{eq:K_c3} hold for $\tilde{W}$, one can directly resort to the Jordan decomposition of $\tilde{W}$ to obtain the limit $\lim_{t\to\infty} x_t= \lim_{t\to\infty}\tilde{W}^t =x^*_{\infty}$ (see \cite[Ch 2.1]{FB-LNS} for more details). 
This completes the proof. 
\end{proof}

The conditions in Lemma \ref{le:K_conditions} guarantee that $x_{t}$ can exactly converge to the original consensus point $x_{\infty}^*$. 
Next, we will show that satisfaction of equality constraints \eqref{eq:K_c1} and \eqref{eq:K_c2} can guarantee a feasible $K$ that also satisfies \eqref{eq:K_c3}. 

\begin{theorem}\label{th:eigenvalue}
Suppose there exists a $K_0\in\mathbb{R}^{n\times n}$ such that $K_0 \bm{1}=0$ and $w^\intercal K_0=0$. 
Then, a feasible $K$ can always be found such that $|\mu_i|<1, i=2,\cdots,n,$ holds. 
\end{theorem}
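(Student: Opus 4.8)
The plan is to exhibit a feasible $K$ by scaling the given $K_0$, namely to set $K=\epsilon K_0$ for a scalar $\epsilon>0$ and argue that a sufficiently small $\epsilon$ meets all three requirements of Lemma \ref{le:K_conditions}. The two equality constraints come for free: since $K_0\bm{1}=0$ and $w^\intercal K_0=0$ are homogeneous in $K_0$, any multiple $\epsilon K_0$ again satisfies $\epsilon K_0\bm{1}=0$ and $w^\intercal(\epsilon K_0)=0$. Consequently $\tilde{W}=W+\epsilon K_0$ retains $\bm{1}$ as a right eigenvector and $w$ as a left eigenvector of $\mu_1=1$ for every $\epsilon$, so the only thing left to verify is condition \eqref{eq:K_c3}, i.e.\ that the remaining eigenvalues $\mu_2,\ldots,\mu_n$ can be driven strictly inside the unit circle.

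The key step is to isolate those eigenvalues on an invariant subspace. Consider $S=\{v\in\mathbb{R}^n:w^\intercal v=0\}$, which has dimension $n-1$. Using $w^\intercal W=w^\intercal$ (as $w$ is the left eigenvector of $\lambda_1=1$) and $w^\intercal K_0=0$, one checks that $S$ is invariant under both $W$ and $K_0$, hence under $\tilde{W}$. Since $w^\intercal\bm{1}=1$ gives the direct-sum decomposition $\mathbb{R}^n=\operatorname{span}\{\bm{1}\}\oplus S$ into $\tilde{W}$-invariant subspaces, the spectrum of $\tilde{W}$ splits as $\{1\}\cup\{\mu_2,\ldots,\mu_n\}$, where $\mu_2,\ldots,\mu_n$ are exactly the eigenvalues of the restriction $\tilde{W}|_S=W|_S+\epsilon K_0|_S$. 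By Assumption \ref{assu:topo} the spectrum of $W|_S$ is $\{\lambda_2,\ldots,\lambda_n\}$, so $\rho(W|_S)=\max_{i\ge 2}|\lambda_i|<1$, where $\rho(\cdot)$ denotes spectral radius.

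Finally I would invoke continuity of the eigenvalues (equivalently, of the spectral radius) as a function of the matrix entries: since $W|_S+\epsilon K_0|_S\to W|_S$ as $\epsilon\to 0$, we have $\rho(W|_S+\epsilon K_0|_S)\to\rho(W|_S)<1$, so there exists $\bar{\epsilon}>0$ such that $\rho(\tilde{W}|_S)<1$ --- equivalently $|\mu_i|<1$ for $i=2,\ldots,n$ --- for all $0<\epsilon\le\bar{\epsilon}$. Any such $K=\epsilon K_0$ is then feasible, which establishes the claim.

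The main obstacle is the bookkeeping needed to guarantee that the simple eigenvalue at $1$ does not leak into the perturbed spectrum and spuriously produce some $\mu_i$ with modulus $1$: a naive continuity argument applied to the full matrix $\tilde{W}$ leaves open whether one of $\mu_2,\ldots,\mu_n$ drifts toward $1$. The invariant-subspace restriction resolves this cleanly by pinning the eigenvalue $1$ to the one-dimensional complement of $S$ and confining the perturbation analysis entirely to $W|_S$, whose spectrum is bounded away from the unit circle. An alternative refinement would be to make the continuity quantitative, e.g.\ through a bound on the eigenvalue perturbation in terms of $\epsilon\,\|K_0\|$, which would additionally yield an explicit admissible range for $\epsilon$.
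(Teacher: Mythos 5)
Your proof is correct, but it takes a genuinely different route from the paper. The paper fixes a spectral gap $\delta$ between $1$ and the remaining eigenvalues of $W$, applies the Bauer--Fike theorem to bound $|\mu_i-\lambda_s|\le\kappa(V)\|K\|$, chooses $\epsilon$ so that $\|K\|<\delta/(2\kappa(V))$, and then rules out by contradiction (invoking persistence of the simple eigenvalue $1$ under small perturbations, via Kato) that any $\mu_i\neq 1$ can sit near $1$; diagonalizability of $W$ is assumed and the general case is deferred to a generalized Bauer--Fike bound. You instead exploit the exact algebraic structure: since $w^\intercal K_0=0$ and $K_0\bm{1}=0$, the decomposition $\mathbb{R}^n=\operatorname{span}\{\bm{1}\}\oplus\{v:w^\intercal v=0\}$ is $\tilde{W}$-invariant for every $\epsilon$, so the eigenvalue $1$ is pinned exactly to the one-dimensional factor and $\mu_2,\dots,\mu_n$ are exactly the spectrum of $\tilde{W}|_S$; continuity of the spectral radius then finishes the argument. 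What the paper's route buys is an explicit admissible range for $\epsilon$ in terms of $\delta$ and $\kappa(V)$; what your route buys is a cleaner and arguably more robust argument --- it needs no diagonalizability, no condition number, and no appeal to perturbation theory of simple eigenvalues, because the splitting of the spectrum is exact rather than approximate. The trade-off (qualitative existence of $\bar{\epsilon}$ versus a quantitative bound) is one you already acknowledge, and your suggested refinement via a bound on the eigenvalue perturbation in terms of $\epsilon\|K_0\|$ would essentially recover the paper's Bauer--Fike estimate on the restricted matrix $W|_S$.
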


\begin{proof}
To ease analysis, we assume that $W$ is diagonalizable and prove the non-diagonalizable case in the end. 

First, the eigenvalue decomposition for $W$ is given by 
\begin{equation}
W=V\Lambda V^{-1},
\end{equation}
where $\Lambda=\operatorname{diag}(\lambda_1,\cdots,\lambda_n)$ and the $i$-th column of $V$ is the right eigenvector corresponding to $\lambda_i$. 
Then, we have that $\lambda_1=1$ and the other eigenvalues $\lambda_i,i=2,\dots,n$ satisfy
\begin{equation}\label{eq:gap}
|\lambda_i|\le 1-\delta \Rightarrow \delta \le 1-|\lambda_i| \le |1 - \lambda_i|
\end{equation}
for some $\delta>0$. 
This means there is at least a gap of $\delta$ between 1 and all other eigenvalues for $W$. 
Then, applying the Bauer-Fike theorem \cite{bauer1960norms}, we obtain that for any eigenvalue of $\tilde{W}$, there exists an eigenvalue $\lambda_s$ of $W$ such that 
\begin{equation}\label{eq:mu_bound}
|\mu_i-\lambda_s|\le \kappa(V)\|K\|,~i=1,\cdots,n,
\end{equation}
where $\kappa(V)=\|V\|\|V^{-1}\|$ is the condition number of $V$. 
Since $K_0 \bm{1}=0$ and $w^\intercal K_0=0$, 
we can always find a small $\epsilon>0$ to construct a new feedback matrix $K=\epsilon K_0$, such that 
\begin{equation}\label{eq:K_bound}
\|K\|=\epsilon\|K_0\|<\frac{\delta}{2\kappa(V)}. 
\end{equation}
Then, we turn to prove that the eigenvalue $\lambda_s\neq 1$ in \eqref{eq:mu_bound}. 

When \eqref{eq:K_bound} holds, assume for contradiction that there exists a non-$1$ eigenvalue $\mu_{i_0}$ of $\tilde{W}$ such that $|\mu_{i_0}-1|<\frac{\delta}{2}$. 
Substituting \eqref{eq:K_bound} into \eqref{eq:mu_bound}, $\mu_{i_0}$ should also satisfy 
\begin{equation}\label{eq:mu_22}
|\mu_{i_0}-\lambda_s|<\frac{\delta}{2}. 
\end{equation}
There are two possibilities for $\lambda_s$ in \eqref{eq:mu_22}. 
\begin{itemize}
\item If $\lambda_s=1$, we have $|\mu_{i_0}-1|<\frac{\delta}{2}$. 
However, recall that $1$ is a simple eigenvalue of $W$ and it is also preserved in $\tilde{W}$. 
By classical perturbation theory, a simple eigenvalue persists uniquely under sufficiently small perturbations (see \cite[Ch 2.1.8]{kato2013perturbation}). 
In other words, if $\mu_{i_0}$ is distinct from $1$ yet satisfying $|\mu_{i_0}-1|<\frac{\delta}{2}$, it will violate the uniqueness of the eigenvalue 1 in $\tilde{W}$. 
\item If $\lambda_s\neq 1$, it should satisfy the gap \eqref{eq:gap}, i.e., $|1 - \lambda_s| \ge \delta $. 
Applying the triangle inequality, we obtain
\begin{align}
|\mu_{i_0}-1|&=|\mu_{i_0}-\lambda_s + \lambda_s-1| \nonumber \\
&\ge |\lambda_s-1| - |\mu_{i_0}-\lambda_s|  > \delta - \frac{\delta}{2}=\frac{\delta}{2},
\end{align}
which contradicts with the assumption $|\mu_{i_0}-1|<\frac{\delta}{2}$. 
\end{itemize}
In either case, a contradiction arises. 
Hence, we have that for arbitrary eigenvalue $ \mu_{i_0} \neq 1$ in $\tilde{W}$, the corresponding $\lambda_s \neq 1 $. 
Following this conclusion, $\mu_{i_0}$ is upper bounded as 
\begin{align}
|\mu_{i_0}| &= |\mu_{i_0} - \lambda_s + \lambda_s| \le |\mu_{i_0}-\lambda_s| + |\lambda_s| \nonumber \\
&\le \frac{\delta}{2} + (1-\delta)=1-\frac{\delta}{2}.
\end{align}
This shows that all eigenvalues of $\tilde{W}$ except $1$ remain strictly within the unit circle, which indicates that \eqref{eq:K_c3} can hold. 

As for the case where $W$ is non-diagonalizable but still satisfies Assumption \ref{assu:topo}, we can leverage generalized Bauer-Fike theorem \cite{bauer1960norms} to establish a more complex upper bound for $|\mu_i-\lambda_s|$. 
The following proof procedures resemble those after \eqref{eq:mu_bound} and are omitted here. 
\end{proof}

Theorem \ref{th:eigenvalue} demonstrates that as long as the equality constraints \eqref{eq:K_c1} and \eqref{eq:K_c2} have non-zero solutions, then a feasible $K$ that meets the eigenvalue modulus constraint \eqref{eq:K_c3} always exists. 
This result is guaranteed by the continuity of eigenvalues of a matrix under disturbances. 
With this in mind, we only need to focus on the equality constraints in the following analysis for finding appropriate feedback $K$. 



\subsection{Invariant Subspace based Solvability Analysis}


The part investigates the solvability of \eqref{eq:W_estimator}. 
The analysis involves the following definitions and properties. 

\begin{definition} {\cite[Ch. 1]{Gohberg2006invariant}}
    Let $A: \mathbb{R}^{n} \to \mathbb{R}^{n}$ be a linear transformation. A subspace $\mathcal{A} \subset \mathbb{R}^{n}$ is called $A$-invariant, if $Ax \in \mathcal{A}$ for every vector $x \in \mathcal{A}$.
\end{definition}

Trivial examples of invariant subspaces are $\left\{0\right\}$ and $\mathbb{R}^{n}$. 
The subspaces $\operatorname{Ker}(A^m) = \left\{x \in \mathbb{R}^{n} | A^mx=0\right\}$ and $\operatorname{Im}(A^m) = \left\{A^mx| x \in \mathbb{R}^{n} \right\}$, where $m=1,2,...$, are also $A$-invariant. 

\begin{lemma}{\cite[Ch. 1]{Gohberg2006invariant}} \label{le:condition}
Given a transformation $A$ and $x\in \mathbb{R}^{n}$, 
the subspace $\mathcal{A}_\infty = \operatorname{span}\left\{x,Ax,\cdots\right\}$ is equivalent to $\mathcal{A}_n = \operatorname{span}\left\{x,Ax,\cdots,A^{n-1}x\right\}$, and $\mathcal{A}_n$ is $A$-invariant.
\end{lemma}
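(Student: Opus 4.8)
The plan is to establish the two assertions in order: first the equality $\mathcal{A}_\infty = \mathcal{A}_n$, and then the $A$-invariance of $\mathcal{A}_n$, which will follow almost immediately from that equality. No spectral structure on $A$ is needed; the whole argument is finite-dimensional linear algebra driven by a single application of the Cayley-Hamilton theorem.

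For the equality, one inclusion is free: since $\mathcal{A}_n$ is spanned by a subset of the generators of $\mathcal{A}_\infty$, we trivially have $\mathcal{A}_n \subseteq \mathcal{A}_\infty$. The substantive direction is $\mathcal{A}_\infty \subseteq \mathcal{A}_n$, i.e., that every iterate $A^k x$ with $k \geq n$ already lies in $\operatorname{span}\{x, Ax, \ldots, A^{n-1}x\}$. I would invoke Cayley-Hamilton: writing the characteristic polynomial of $A$ as $p(\lambda) = \lambda^n + c_{n-1}\lambda^{n-1} + \cdots + c_0$, the identity $p(A) = 0$ gives $A^n = -\sum_{j=0}^{n-1} c_j A^j$, and hence $A^n x = -\sum_{j=0}^{n-1} c_j A^j x \in \mathcal{A}_n$. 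This disposes of the first ``new'' iterate.

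To reach all higher iterates I would argue by induction on $k$. Suppose $A^k x \in \mathcal{A}_n$ for some $k \geq n-1$, so that $A^k x = \sum_{j=0}^{n-1} \beta_j A^j x$ for suitable scalars. Applying $A$ gives $A^{k+1}x = \sum_{j=0}^{n-1}\beta_j A^{j+1}x$. Every term $A^{j+1}x$ with $j \leq n-2$ is a generator of $\mathcal{A}_n$, while the single remaining term $A^n x$ lies in $\mathcal{A}_n$ by the Cayley-Hamilton step above; therefore $A^{k+1}x \in \mathcal{A}_n$, closing the induction and yielding $\mathcal{A}_\infty \subseteq \mathcal{A}_n$. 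Combining both inclusions gives $\mathcal{A}_\infty = \mathcal{A}_n$.

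For $A$-invariance I would exploit the equality just proven. Take any $v \in \mathcal{A}_n = \mathcal{A}_\infty$ and write $v = \sum_{j=0}^{n-1}\gamma_j A^j x$; then $Av = \sum_{j=0}^{n-1}\gamma_j A^{j+1}x$ is again a linear combination of iterates of $x$, so $Av \in \mathcal{A}_\infty = \mathcal{A}_n$, which is exactly the claim. I do not anticipate a genuine obstacle, as the statement is classical; the only point needing care is recognizing that Cayley-Hamilton, rather than any assumption on the spectrum of $A$, is precisely what collapses the infinite Krylov sequence to its first $n$ terms. As a self-contained backup that avoids the characteristic polynomial, I would keep in reserve a dimension-stabilization argument: the nested chain $\mathcal{A}_1 \subseteq \mathcal{A}_2 \subseteq \cdots$ of subspaces of $\mathbb{R}^n$ cannot increase strictly forever, so it stabilizes at some index $m \leq n$, and once $A^m x \in \mathcal{A}_m$ the chain is constant thereafter, giving both the equality and the invariance by the same linear-combination bookkeeping.
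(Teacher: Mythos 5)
Your proof is correct. Note that the paper does not actually prove this lemma---it is quoted verbatim from Gohberg, Lancaster, and Rodman with a citation and no argument---so there is no in-paper proof to compare against; your Cayley--Hamilton argument (trivial inclusion $\mathcal{A}_n \subseteq \mathcal{A}_\infty$, then $A^n x \in \mathcal{A}_n$ via the characteristic polynomial, then induction to absorb all higher iterates, with invariance falling out of the equality) is the standard and complete route, and your backup via stabilization of the nested chain $\mathcal{A}_1 \subseteq \mathcal{A}_2 \subseteq \cdots$ is equally valid and is essentially the argument one finds in the cited reference. Both are sound; no gaps.
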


The above lemma suggests that to study the rank of $X_a = [x_0,\tilde{W}x_0,\tilde{W}^2x_2,\cdots,\tilde{W}^{T-1}x_0]$ ($T\ge n$), it is equivalent to investigate the dimension of the following subspace
\begin{equation} \label{eq:Inv_W}
\tilde{\mathcal{W}}_n \triangleq  \operatorname{span}\left\{x_0,\tilde{W}x_0,\tilde{W}^2x_0,\cdots,\tilde{W}^{n-1}x_0\right\}.
\end{equation}
Then, it holds that $\operatorname{dim}(\tilde{\mathcal{W}}_n)=\operatorname{dim}(\tilde{\mathcal{W}}_{\infty})=\operatorname{rank}(X_a)$. 

\begin{lemma} \label{lemma:solvability}
    Let $\mathcal{S}(\tilde{W})$ be the set of all proper invariant subspaces in $\tilde{\mathcal{W}}_n$,  where ``proper'' excludes the trivial subspaces $\left\{0\right\}$ and $\mathbb{R}^{n}$. Then, we have 
        $$\operatorname{rank}(X_a) = n~\text{if and only if}~x_{0}\notin\mathcal{S}(\tilde{W}).$$
\end{lemma}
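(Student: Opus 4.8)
The plan is to recognize this statement as the classical characterization of a \emph{cyclic} (generating) vector: $\operatorname{rank}(X_a)=n$ holds exactly when $x_0$ generates all of $\mathbb{R}^n$ under $\tilde{W}$, and such a generating vector is precisely one that is contained in no proper invariant subspace. Throughout I read $x_0\notin\mathcal{S}(\tilde{W})$ as the statement that $x_0$ lies in none of the proper $\tilde{W}$-invariant subspaces collected by $\mathcal{S}(\tilde{W})$. The single structural fact doing all the work is Lemma \ref{le:condition}: the Krylov subspace $\tilde{\mathcal{W}}_n$ is itself $\tilde{W}$-invariant and, by construction, contains $x_0$, with $\operatorname{dim}(\tilde{\mathcal{W}}_n)=\operatorname{rank}(X_a)$. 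I would establish each implication by contraposition.

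For necessity ($\operatorname{rank}(X_a)=n \Rightarrow x_0\notin\mathcal{S}(\tilde{W})$) I would argue by contradiction. Suppose $x_0$ belongs to some proper invariant subspace $\mathcal{A}\subsetneq\mathbb{R}^{n}$. Invariance gives $\tilde{W}x_0\in\mathcal{A}$, and inductively $\tilde{W}^{k}x_0\in\mathcal{A}$ for every $k\ge 0$; hence all spanning vectors of $\tilde{\mathcal{W}}_n$ lie in $\mathcal{A}$, so $\tilde{\mathcal{W}}_n\subseteq\mathcal{A}\subsetneq\mathbb{R}^{n}$. This forces $\operatorname{dim}(\tilde{\mathcal{W}}_n)<n$, i.e. $\operatorname{rank}(X_a)<n$, contradicting the hypothesis.

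For sufficiency ($x_0\notin\mathcal{S}(\tilde{W}) \Rightarrow \operatorname{rank}(X_a)=n$) I would again use contraposition, now exploiting the invariance of the Krylov subspace itself. If $\operatorname{rank}(X_a)<n$, then $\tilde{\mathcal{W}}_n$ has dimension strictly below $n$ and is therefore a proper subspace of $\mathbb{R}^{n}$ (nontrivial, since it contains $x_0\neq 0$). By Lemma \ref{le:condition} it is $\tilde{W}$-invariant, and by construction it contains $x_0$. Thus $\tilde{\mathcal{W}}_n$ is itself a proper invariant subspace containing $x_0$, whence $x_0\in\mathcal{S}(\tilde{W})$, which is exactly the contrapositive of the claim.

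The argument is short precisely because the real content is already packaged in Lemma \ref{le:condition}; the only points requiring care are the edge cases and the precise reading of $\mathcal{S}(\tilde{W})$. In particular I would note explicitly that $x_0\neq 0$, so that $\tilde{\mathcal{W}}_n\neq\{0\}$ and the subspace witnessing non-full rank is genuinely proper rather than trivial, and I would record that $\tilde{\mathcal{W}}_n$ is in fact the \emph{smallest} $\tilde{W}$-invariant subspace containing $x_0$, which is what makes it the natural witness in the sufficiency direction. Beyond this bookkeeping, no genuine obstacle arises.
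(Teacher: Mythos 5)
Your proof is correct and follows essentially the same route as the paper's: the necessity direction traps the whole Krylov sequence inside the assumed proper invariant subspace, and the sufficiency direction uses the invariance of $\tilde{\mathcal{W}}_n$ itself (from Lemma \ref{le:condition}) as the witnessing subspace. Your explicit remarks that $x_0\neq 0$ is needed to rule out the trivial subspace and that $\tilde{\mathcal{W}}_n$ is the smallest invariant subspace containing $x_0$ are small but worthwhile additions of rigor that the paper leaves implicit.
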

\begin{proof}
\textit{Necessity}: We prove the necessity by contraposition: 
if $x_{0} \in \mathcal{S}(\tilde{W})$, then $\operatorname{dim}(\tilde{\mathcal{W}}_n) < n$. 
Denote by $\mathcal{S}_{0} \subseteq \mathcal{S}(\tilde{W})$ a subset where $x_{0}$ lies in. 
Since $\mathcal{S}_{0}$ is a proper $\tilde{{W}}$-invariant subspace that excludes $\left\{0\right\}$ and $\mathbb{R}^{n}$ by definition, we have $\operatorname{dim}(\mathcal{S}_{0}) < n$ and $\tilde{{W}}^{t} x_{0}\in\mathcal{S}_{0}$, $\forall t \ge 0$.  
Hence, each vector in $\{x_{0}, \tilde{{W}} x_{0}, \tilde{{W}}^{2}x_{0}, \dots, \tilde{{W}}^{n-1} x_{0}\}$ 
lies in $\mathcal{S}_{0}$, implying $\tilde{\mathcal{W}}_n $ defined in \eqref{eq:Inv_W} is a subset of $\mathcal{S}_{0}$. 
It then follows that $\operatorname{dim}(\tilde{\mathcal{W}}_n) \le \operatorname{dim}(\mathcal{S}_{0}) <n$.

\textit{Sufficiency}: Consider $x_{0}\notin \mathcal{S}_{0}$ for every $\mathcal{S}_{0}\subseteq\mathcal{S}(\tilde{W})$. 
This means no proper invariant subspace contains $x_{0}$. 
If $\tilde{\mathcal{W}}_n$ itself were contained in a proper invariant subspace, then $x_{0}$ would also lie there, contradicting with $x_{0}\notin \mathcal{S}_{0}$. 
Hence, $\tilde{\mathcal{W}}_n$ cannot be contained in any proper invariant subspace, indicating $\tilde{\mathcal{W}}_n =\mathbb{R}^n$ and $\operatorname{dim}(\tilde{\mathcal{W}}_n) = n$. 
\end{proof}

Similar to Lemma \ref{lemma:solvability}, for the nominal NS \eqref{eq:local_model}, if $x_0 \in \mathcal{S}(W)$, where $\mathcal{S}(W)$ is the set of all proper invariant subspaces of $W$, then it indicates that $\operatorname{rank}(X_a^*) < n$ and the topology estimator \eqref{eq:old_estimator} is already unsolvable. 
Therefore, our primary focus lies in investigating that when $x_0 \notin \mathcal{S}(W)$, 
whether it is possible to design a feedback matrix $K$ such that $\operatorname{rank}(X_a) < n$, rendering the topology estimator \eqref{eq:W_estimator} unsolvable. 
The following theorem demonstrates that this is indeed attainable.

\begin{theorem} \label{th:solvability}
    For any non-zero $x_0 \notin \mathcal{S}(W)$, there exists a feedback matrix $K$ such that $\operatorname{rank}(X_a) < n$.
\end{theorem}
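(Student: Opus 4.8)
The plan is to route the statement through Lemma \ref{lemma:solvability}, which tells us that $\operatorname{rank}(X_a) < n$ is equivalent to $x_0$ lying in some proper $\tilde{W}$-invariant subspace, where $\tilde{W} = W + K$. Hence I do not need to manipulate the data matrix $X_a$ directly; it suffices to exhibit a single feedback $K$ for which $\tilde{W}$ admits a proper invariant subspace passing through $x_0$. The most economical target is to force $x_0$ to become an eigenvector of $\tilde{W}$, so that $\operatorname{span}\{x_0\}$ is a one-dimensional invariant subspace containing $x_0$.

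Concretely, I would fix any scalar $\mu$ and impose $\tilde{W}x_0 = \mu x_0$, i.e., $K x_0 = (\mu I - W)x_0$. This is a single linear constraint on $K$, and since $x_0 \neq 0$ the rank-one choice $K = (\mu I - W)x_0\, x_0^\intercal/(x_0^\intercal x_0)$ satisfies it. With this $K$ one gets $\tilde{W}^k x_0 = \mu^k x_0$ for every $k$, so all columns of $X_a$ are collinear with $x_0$ and $\operatorname{rank}(X_a) = 1 < n$. Invoking Lemma \ref{lemma:solvability} in the forward direction then confirms that the estimator \eqref{eq:W_estimator} is unsolvable. This already establishes the bare existence claimed in the theorem.

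The delicate point, and where I expect the real effort to lie, is making this rank collapse coexist with the consensus requirements $K\bm{1}=0$ and $w^\intercal K = 0$ from Lemma \ref{le:K_conditions} (and, if desired, the sparsity \eqref{eq:K_sparse}). Turning $x_0$ into a single eigenvector with $\mu \neq 1$ conflicts with $w^\intercal K = 0$ unless $w^\intercal x_0 = 0$, so a naive eigenvector assignment is not directly compatible. To cover a generic $x_0$, I would decompose $x_0 = \alpha\bm{1} + x_0^\perp$ with $\alpha = w^\intercal x_0$ and $x_0^\perp \in w^\perp := \{v : w^\intercal v = 0\}$. Under $w^\intercal K = 0$ one has $w^\intercal \tilde{W} = w^\intercal$, so $w^\perp$ is $\tilde{W}$-invariant; moreover $x_0^\perp \neq 0$, because $x_0 \notin \mathcal{S}(W)$ excludes $x_0 \in \operatorname{span}\{\bm{1}\}$. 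I would then design $K$ (with $K\bm{1}=0$, $w^\intercal K = 0$) so that $x_0^\perp$ is an eigenvector of $\tilde{W}$ restricted to $w^\perp$, with eigenvalue $\mu$ of modulus below $1$.

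Granting that assignment, the orbit closes up nicely: since $\tilde{W}\bm{1}=\bm{1}$ we get $\tilde{W}^k x_0 = \alpha\bm{1} + \mu^k x_0^\perp \in \operatorname{span}\{\bm{1}, x_0^\perp\}$, a proper two-dimensional $\tilde{W}$-invariant subspace when $n \ge 3$, so $\operatorname{rank}(X_a)\le 2 < n$; and if $\alpha = 0$ the orbit stays inside $w^\perp$, giving $\operatorname{rank}(X_a)\le n-1<n$ for free. The main obstacle is thus purely a feasibility check: I must verify that the constraint $K x_0^\perp = (\mu I - W)x_0^\perp$ is consistent with the two equality conditions — which it is, since $w^\intercal(\mu I - W)x_0^\perp = -w^\intercal x_0^\perp = 0$ and $\bm{1}, x_0^\perp$ are independent — and that this partial specification extends to a full matrix $K$ obeying $w^\intercal K = 0$ (and, if sparsity is imposed, respecting the support of $W$). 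Reconciling the eigenvector placement with that sparsity pattern is the step I expect to require the most care.
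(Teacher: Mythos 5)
Your proposal is correct, and its core mechanism coincides with the paper's: a rank-one feedback that turns $x_0$ (or a component of it) into an eigenvector of $\tilde{W}$, so that the orbit subspace $\operatorname{span}\{x_0,\tilde{W}x_0,\dots\}$ collapses and $\operatorname{rank}(X_a)<n$ follows from Lemma~\ref{lemma:solvability} or a direct rank count. The paper takes $K=pq^\intercal$ with $q^\intercal\bm{1}=0$, $q^\intercal x_0\neq 0$, $w^\intercal p=0$ and solves $\tilde{W}x_0=\beta x_0$; since $w^\intercal W=w^\intercal$, the constraint $w^\intercal p=0$ forces $\beta=w^\intercal Wx_0/(w^\intercal x_0)=1$ (requiring $w^\intercal x_0\neq0$), so the trajectory freezes at $x_0$, and the paper explicitly concedes that convergence to the original consensus point and sparsity are left unaddressed. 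Your refinement is a genuinely different variant and buys more: by decomposing $x_0=\alpha\bm{1}+x_0^\perp$ and assigning the eigenvector role to $x_0^\perp$ with $|\mu|<1$, the orbit stays in the invariant subspace $\operatorname{span}\{\bm{1},x_0^\perp\}$, giving $\operatorname{rank}(X_a)\le 2<n$ for $n\ge 3$ \emph{while the trajectory still converges to} $(w^\intercal x_0)\bm{1}$; your use of $x_0\notin\mathcal{S}(W)$ to rule out $x_0\in\operatorname{span}\{\bm{1}\}$ (hence $x_0^\perp\neq 0$) is exactly the right invocation of the hypothesis, and the consistency check of the partial specification of $K$ is sound. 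Two minor points, neither affecting correctness: the intermediate equality should read $w^\intercal(\mu I-W)x_0^\perp=(\mu-1)\,w^\intercal x_0^\perp=0$ rather than $-w^\intercal x_0^\perp$; and for $n=2$ the two-dimensional subspace is all of $\mathbb{R}^2$, so there you must fall back on your basic rank-one construction (which your first paragraph already supplies, and which alone proves the theorem as literally stated).
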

\begin{proof}
We prove this existence statement by construction. 
First, recall that $\bm{1}$ and $w$ are the right and left eigenvectors corresponding to $\lambda_1=1$ for $W$. 
Let $K = pq^\intercal$ for two vectors $p,q \in \mathbb{R}^{n}$, where the vectors satisfy 
\begin{equation}\label{eq:pq}
q^\intercal\bm{1}=0,~q^\intercal x_0 \neq 0,~\text{and}~w^\intercal p=0.
\end{equation}
By construction \eqref{eq:pq}, the conditions \eqref{eq:K_c1} and \eqref{eq:K_c2} are ensured. 

Next, we introduce a scalar $\beta$ to enable $\tilde{W}x_0 = \beta x_0$, such that $\operatorname{rank}(X_a) < n$. 
Then, it follows from $Wx_0 + (pq^\intercal) x_0 =  \beta x_0$ that $p$ can be solved as
\begin{equation}\label{eq:p_solu}
p = \frac{\beta x_0-Wx_0}{q^\intercal x_0}. 
\end{equation}
Meanwhile, to ensure $w^\intercal p=0$, we have
\begin{equation}\label{eq:beta}
    w^\intercal \frac{\beta x_0-Wx_0}{q^\intercal x_0}=0 \Rightarrow \beta = \frac{w^\intercal Wx_0}{w^\intercal x_0},
\end{equation}
where $w^\intercal x_0$ is exactly the consensus value. 
Hence, constructing $K$ by \eqref{eq:beta}, \eqref{eq:pq} and \eqref{eq:p_solu} sequentially can make $\operatorname{rank}(X_a) < n$. 
The proof is completed. 
\end{proof}

Theorem \ref{th:solvability} indicates that one can always construct a feedback matrix $K$ that renders $\hat{W}$ unsolvable. 
Specifically, the proof process provides a rank‐$1$ construction approach to achieve so, whose procedure is summarized below. 
\begin{itemize}
    \item Step 1: pick any $q$ such that $q^\intercal\bm{1}=0$ and $q^\intercal x_0 \neq 0$. 
    \item Step 2: take $p = \frac{\beta x_0-Wx_0}{q^\intercal x_0}$, where $\beta$ is given in \eqref{eq:beta}.
    \item Step 3: construct $K = pq^\intercal$. 
\end{itemize}
However, this approach has not yet explicitly accounted for the constraints \eqref{eq:converge_state} and \eqref{eq:K_sparse}. 
Whether a nontrivial solution exists under all the constraints is more involved than the setting of Theorem \ref{th:solvability}, as a nonlinear factorization type constraint is incurred by the rank‐$1$ construction. 
We leave related analysis and design in this direction for future research.

\subsection{Conditions of Achieving Inaccurate Inference}
Despite the difficulty of degenerating the solvability for $\hat{W}$, in this part, we show how to invalidate the accurate inference by allowing a unique solution for $\hat{W}$ but making it inaccurate. 
To begin with, we show the existence condition of a $K$ satisfying the sparsity constraint \eqref{eq:K_sparse}.

Treating \eqref{eq:K_c1} and \eqref{eq:K_c2} as two matrix equations on $K$, they can be vectorized as 
\begin{equation}\label{eq:M_constraint}
\left\{\begin{aligned}
(\bm{1}_n^\intercal \otimes I_n )\operatorname{vec}(K)=0\\
(I_n \otimes w^\intercal)  \operatorname{vec}(K)=0
\end{aligned}
\right.
\Rightarrow  M \theta_K=0,
\end{equation}
where $M=[(\bm{1}_n^\intercal \otimes I_n )^\intercal, (I_n \otimes w^\intercal)^\intercal]^\intercal \in \mathbb{R}^{2n \times n^2}$ and $\theta_K = \operatorname{vec}(K)  \in \mathbb{R}^{n^2}$. 
Considering the sparsity constraint \eqref{eq:K_sparse}, we define the index set for non-zero elements in $W$ as 
\begin{equation}\label{eq:def_z}
\mathcal{Z}=\{W_{ij}: W_{ij} \neq 0,i,j\in\mathcal{V}\},
\end{equation}
with $z=|\mathcal{Z}|$ being the cardinality number of $\mathcal{Z}$. 
Then, denote $\tilde{\theta}_K \in\mathbb{R}^{z}$ as the reduced parameter vector of $\theta_K$ after deleting zero elements in $\theta_K$, and \eqref{eq:M_constraint} can be reformulated as 
\begin{equation}\label{eq:new}
\tilde{M}  \tilde{\theta}_K=0,
\end{equation}
where $\tilde{M}\in\mathbb{R}^{2n \times z}$ is a modified version of $M$ by deleting the columns of $M$ corresponding to zeros elements in $\theta_K$. 
Next, we denote by $\mathcal{K}$ the set of all feasible $K$: 
\begin{equation}
\mathcal{K}\!=\!\{K\!\in\!\mathbb{R}^{n\times n}: K~\text{satisfies}~\eqref{eq:K_c1},~\eqref{eq:K_c2},~\eqref{eq:K_c3}~\text{and}~\eqref{eq:K_sparse}\},
\end{equation}
and present existence conditions for $\mathcal{K}$ as below. 
\begin{theorem}\label{th:exist_K}
The set $\mathcal{K} \neq \emptyset$ if the following condition holds
\begin{equation}\label{eq:condition}
\operatorname{rank}(\tilde{M})< z.  
\end{equation}
\end{theorem}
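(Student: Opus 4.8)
The plan is to show that condition \eqref{eq:condition} guarantees the existence of a nonzero feasible $K$ satisfying all four constraints. The key observation is that \eqref{eq:K_c1} and \eqref{eq:K_c2} together with the sparsity constraint \eqref{eq:K_sparse} reduce, after vectorization and deletion of the forced-zero entries, to the homogeneous linear system \eqref{eq:new}, namely $\tilde{M}\tilde{\theta}_K = 0$ with $\tilde{M}\in\mathbb{R}^{2n\times z}$. A homogeneous system admits a nonzero solution precisely when the coefficient matrix has nontrivial kernel, i.e. when $\operatorname{rank}(\tilde{M}) < z$. So the first and main step is to invoke the rank--nullity theorem: under \eqref{eq:condition} we have $\dim\operatorname{Ker}(\tilde{M}) = z - \operatorname{rank}(\tilde{M}) \ge 1$, hence there exists a nonzero $\tilde{\theta}_K$ solving \eqref{eq:new}.

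Next I would lift this reduced solution back to a full matrix. Given a nonzero $\tilde{\theta}_K$, reinsert zeros at the deleted coordinates to form $\theta_K\in\mathbb{R}^{n^2}$, and then reshape via $K = \operatorname{vec}^{-1}(\theta_K)$. By construction $K$ respects the sparsity pattern \eqref{eq:K_sparse}, and because the deleted columns of $M$ were exactly those multiplying the zero entries of $\theta_K$, the identity $M\theta_K = \tilde{M}\tilde{\theta}_K = 0$ holds; unpacking the Kronecker-product form in \eqref{eq:M_constraint} this is equivalent to $K\bm{1}=0$ and $w^\intercal K = 0$, so \eqref{eq:K_c1} and \eqref{eq:K_c2} are satisfied. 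This $K$ is nonzero since $\tilde{\theta}_K$ is nonzero.

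The remaining constraint is the eigenvalue condition \eqref{eq:K_c3}, and here I would appeal directly to Theorem \ref{th:eigenvalue}. That theorem states that whenever there exists a $K_0$ with $K_0\bm{1}=0$ and $w^\intercal K_0 = 0$, one can rescale it to a feasible $K=\epsilon K_0$ meeting \eqref{eq:K_c3}. The nonzero $K$ just constructed plays the role of $K_0$. A subtlety I would address is that scaling by $\epsilon$ must preserve the sparsity pattern — but $\epsilon K_0$ has exactly the same support as $K_0$, so \eqref{eq:K_sparse} is retained, and the linear equality constraints \eqref{eq:K_c1} and \eqref{eq:K_c2} are obviously invariant under scalar multiplication. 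Thus $\epsilon K_0 \in \mathcal{K}$, proving $\mathcal{K}\neq\emptyset$.

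The main obstacle, and the point deserving the most care, is ensuring that the rescaling argument of Theorem \ref{th:eigenvalue} genuinely composes with the sparsity requirement rather than merely with the two equality constraints. Since scaling is support-preserving this is immediate, but it is worth stating explicitly so that the reader sees all four constraints in $\mathcal{K}$ are simultaneously met by a single $\epsilon K_0$. A secondary point worth noting is that \eqref{eq:condition} is stated only as a sufficient condition: if $\operatorname{rank}(\tilde{M}) = z$ the kernel is trivial and the only $K$ meeting the sparse equality constraints is $K=0$, which would leave the topology unchanged and fail \eqref{eq:requirement1}; hence the strict inequality in \eqref{eq:condition} is exactly what secures a \emph{nontrivial} feasible feedback.
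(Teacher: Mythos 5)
Your proposal is correct and follows essentially the same route as the paper: extract a nontrivial kernel element of $\tilde{M}$ under the rank condition, lift it back to a sparse $K$ satisfying \eqref{eq:K_c1} and \eqref{eq:K_c2}, and invoke Theorem \ref{th:eigenvalue} to rescale by a small $\epsilon$ so that \eqref{eq:K_c3} holds. Your explicit remark that scaling preserves the support (hence \eqref{eq:K_sparse}) is a point the paper leaves implicit, but the argument is the same.
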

   
\begin{proof}
For the homogenous linear equations \eqref{eq:new}, we have
\begin{itemize}
\item  if $\operatorname{rank}(\tilde{M})>z$, \eqref{eq:new} has no solution;
\item if $\operatorname{rank}(\tilde{M})=z$, \eqref{eq:new} has only trivial solution $\tilde{\theta}_K=0$. 
\end{itemize}
Hence, \eqref{eq:new} has nontrivial solutions only when $\operatorname{rank}(\tilde{M})<z$. 
Let $\tilde{\theta}_{K^*}$ be a nontrivial solution to equation \eqref{eq:new} when $\operatorname{rank}(\tilde{M})< z$ holds. 
The augmented version of $\tilde{\theta}_{K^*}$, $\theta_{K^*}$, is obtained by supplementing the zeros elements required by the sparsity constraint \eqref{eq:K_sparse}. 
Then, we restore the solution matrix $K^*$ by reversing the vectorization \eqref{eq:M_constraint}. 
It is easy to verify that $K^*$ satisfies the equality conditions \eqref{eq:K_c1},~\eqref{eq:K_c2}~and~\eqref{eq:K_sparse}.

Finally, based on Theorem \ref{th:eigenvalue}, we can always find a small $\epsilon>0$ to construct $K=\epsilon K^*$, such that $|\mu_i|<1,~i=2,\cdots,n$ holds, 
which verifies $\mathcal{K}\neq \emptyset$ and completes the proof. 
\end{proof}

Theorem \ref{th:exist_K} demonstrates the existence of a feasible feedback matrix to make the topology estimator inaccurate. 
This feasibility comes from two aspects: 
i) the sparsity structure of $W$ can allow more free parameters than the number of independent constraints; 
ii) as long as \eqref{eq:condition} holds, one can always find a small enough $K$ to satisfy the eigenvalue inequality \eqref{eq:K_c3}, which is guaranteed by the continuity of eigenvalues under small disturbances.

\begin{remark}\label{rema:solution}
We observe that Theorem \ref{th:exist_K} also provides the idea of obtaining an appropriate $K$. 
A straightforward way is to 
\begin{itemize}
\item i) obtain the basis of the kernel space of $\tilde{M}$;
\item ii) combine the basis vectors to form a candidate $K_0$ given desired criteria (e.g., making $\|K^*\|_0$ maximum); 
\item iii) scale $K\!=\!\epsilon K_0 (\epsilon>0)$ such that $|\mu_i|\!<\!1,i=2,\!\cdots\!,n$.
\end{itemize}
This solving manner is suitable for offline or centralized situations, where $K$ needs to be assigned for each node. 
\end{remark}


\subsection{Distributed Design for Inaccurate Inference}

Although finding an optimal feedback matrix $K$ without relying on global information is challenging in general, in this part we will show that it is feasible to construct a certain class of feedback $K$ without global knowledge.

First, note that the critical properties of $K$ lie in i) preserving the left and right eigenvectors of $W$, and ii) respecting the sparsity pattern of $W$ to enable a distributed implementation. 
Since the matrix $W$ itself exhibits the desired structure, we present a distributed design as
\begin{equation}\label{eq:local_K}
K_{ij}=\left\{\begin{aligned}
&\alpha W_{ij},&&\text{if}~i\neq j \\
&\alpha (W_{ii}-1),&&\text{if}~i=j
\end{aligned}\right.,
\end{equation}
where $\alpha > 0$ a positive scalar gain. 
It is clear that $K_{ij}$ in \eqref{eq:local_K} strictly subjects to the constraint \eqref{eq:K_sparse} and indicates that node $i$ can only use information from its neighbors and itself. 
In a global form, $K$ can be written as 
\begin{equation}\label{eq:K_design}
K = -\alpha (I-W),
\end{equation}
which can be regarded a scaled negative Laplacian of the graph $\mathcal{G}$. 
Next, we present the following result. 
\begin{corollary}\label{coro:double}
For the eigenvalues of $W$, denote $r_{\max}=\max\{|\lambda_i|,i=2,\cdots,n\}$. 
The feedback $K$ by \eqref{eq:K_design} can satisfy all constraints in problem \eqref{eq:prob}, if the gain $\alpha$ satisfies
\begin{equation}\label{eq:alpha_bound}
\alpha < \frac{1 - r_{\max}}{1 + r_{\max}}.
\end{equation}
\end{corollary}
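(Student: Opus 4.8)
The plan is to substitute the proposed design \eqref{eq:K_design} into the closed-loop matrix and verify each of the three constraints in problem \eqref{eq:prob} in turn. Writing $\tilde{W} = W + K = (1+\alpha)W - \alpha I$, the key structural observation is that $\tilde{W}$ is an \emph{affine function of} $W$; hence $\tilde{W}$ inherits every (generalized) eigenvector of $W$, and its spectrum is the image of the spectrum of $W$ under the scalar map $\lambda \mapsto (1+\alpha)\lambda - \alpha$. This lets me read off $\mu_i = (1+\alpha)\lambda_i - \alpha$ directly, so that no perturbation argument (e.g. the Bauer-Fike route of Theorem \ref{th:eigenvalue}) is needed, and the whole corollary reduces to checking the hypotheses of Lemma \ref{le:K_conditions} together with the inference requirement \eqref{eq:requirement1}.

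The sparsity constraint \eqref{eq:K_sparse} is immediate: by \eqref{eq:local_K}, $K_{ij} = \alpha W_{ij} = 0$ whenever $W_{ij} = 0$ for $i \neq j$. For the two equality conditions I would invoke the eigenvector identities guaranteed by Assumption \ref{assu:topo}, namely $W\bm{1} = \bm{1}$ and $w^\intercal W = w^\intercal$. Then
\begin{equation}
K\bm{1} = -\alpha (I-W)\bm{1} = -\alpha(\bm{1}-\bm{1}) = 0,
\end{equation}
which is \eqref{eq:K_c1}, and symmetrically $w^\intercal K = -\alpha\, w^\intercal (I-W) = 0$, which is \eqref{eq:K_c2}. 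In particular $\bm{1}$ and $w$ remain the right and left eigenvectors of $\tilde{W}$ associated with the eigenvalue $\mu_1 = 1$, so the consensus point $x_\infty^* = (w^\intercal x_0^*)\bm{1}$ is left exactly invariant.

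It remains to establish the eigenvalue modulus condition \eqref{eq:K_c3}, which is where the bound \eqref{eq:alpha_bound} enters. For $\lambda_1 = 1$ one gets $\mu_1 = 1$, and since the affine map is a bijection it preserves algebraic multiplicities, so $\mu_1 = 1$ stays simple. For $i = 2,\dots,n$ the triangle inequality gives
\begin{equation}
|\mu_i| = |(1+\alpha)\lambda_i - \alpha| \le (1+\alpha)|\lambda_i| + \alpha \le (1+\alpha)r_{\max} + \alpha,
\end{equation}
and requiring the right-hand side to be strictly below $1$ rearranges to exactly $\alpha < (1-r_{\max})/(1+r_{\max})$, i.e. \eqref{eq:alpha_bound}. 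With \eqref{eq:K_c1}, \eqref{eq:K_c2}, and \eqref{eq:K_c3} all verified, Lemma \ref{le:K_conditions} yields the consensus preservation \eqref{eq:converge_state}. Finally, for the inference requirement \eqref{eq:requirement1}, since $\alpha > 0$ and $W \neq I$ under Assumption \ref{assu:topo}, we have $K \neq 0$ and thus $\tilde{W} \neq W$; whenever the observed data matrix has full rank, the estimator \eqref{eq:W_estimator} returns $\hat{W} = \tilde{W} \neq W$, so the inference is inaccurate. The only mildly nontrivial step is the spectral one, but because $\tilde{W}$ commutes with $W$ its eigenvalues are explicit and the argument collapses to the single scalar inequality above; I would further note that this affine structure makes the conclusion robust to $W$ being non-diagonalizable, since the Jordan block sizes are inherited unchanged.
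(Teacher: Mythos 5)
Your proof is correct and follows essentially the same route as the paper: verify \eqref{eq:K_c1}, \eqref{eq:K_c2} directly from $W\bm{1}=\bm{1}$ and $w^\intercal W = w^\intercal$, read off the spectrum of $\tilde{W}=(1+\alpha)W-\alpha I$ via the affine map $\lambda\mapsto(1+\alpha)\lambda-\alpha$, and bound $|\mu_i|$ by the triangle inequality (you group the terms as $(1+\alpha)\lambda_i$ and $-\alpha$ while the paper groups them as $(\lambda_i-1)\alpha$ and $\lambda_i$, but both rearrangements yield exactly the bound \eqref{eq:alpha_bound}). Your explicit verification of $\hat{W}=\tilde{W}\neq W$ for requirement \eqref{eq:requirement1} is a welcome addition that the paper's proof leaves implicit.
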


\begin{proof}
First, when $W$ satisfies Assumption \ref{assu:topo} and $K$ is designed by \eqref{eq:K_design}, we have 
\begin{equation}
\left\{\begin{aligned}
&K \bm{1} = -\alpha (I-W)\bm{1} = \alpha(\bm{1} -\bm{1})=  0 \\
&w^\intercal K = -\alpha w^\intercal(I-W) = \alpha (w^\intercal-w^\intercal)= 0
\end{aligned}
\right.,
\end{equation}
which indicates that the equalities \eqref{eq:K_c1} and \eqref{eq:K_c2} are satisfied. 

Next, we focus on the eigenvalues of $\tilde{W}$. 
Since $\tilde{W}=(1+\alpha)W-\alpha I$, its eigenvalues can be explicitly written as 
\begin{equation}
\mu_i= (1+\alpha)\lambda_i-\alpha,~i=1,\cdots,n,    
\end{equation}
where $\mu_1=\lambda_1=1$ by construction. 
Denote the difference between modulus $|\mu_i|$ and 1 as $g(\lambda_i)=|\mu_i|-1$, and we have
\begin{align}\label{eq:upper}
g(\lambda_i) &=|(\lambda_i-1)\alpha + \lambda_i|-1 \le |\lambda_i-1|\alpha + |\lambda_i| -1.
\end{align}
If the upper bound in \eqref{eq:upper} is enforced to be smaller than zero, then $|\mu_i|<1,i=2,\cdots,n,$ always hold. 
Then, it follows that
\begin{equation}\label{eq:upper2}
|\lambda_i-1|\alpha + |\lambda_i| -1<0 \Leftrightarrow  \alpha<\frac{1-|\lambda_i|}{|\lambda_i-1|}. 
\end{equation}
Note that the right-hand-side term of the second inequality in \eqref{eq:upper2} can be lower bounded by
\begin{equation}
\frac{1-|\lambda_i|}{|\lambda_i-1|}\ge \frac{1-|\lambda_i|}{|\lambda_i|+1} \ge \frac{1-r_{\max}}{r_{\max}+1}. 
\end{equation}
If we substitute $\alpha<\frac{1-r_{\max}}{r_{\max}+1}$ into \eqref{eq:upper}, it yields that 
\begin{align}
g(\lambda_i) & <  |\lambda_i-1| \frac{1-r_{\max}}{r_{\max}+1} + |\lambda_i| -1  \nonumber \\
& \le 1-r_{\max}+|\lambda_i| -1 \le 0,~i=2,\cdots,n,
\end{align}
which ensures that the eigenvalue modulus condition is satisfied. 
The proof is completed.  
\end{proof}

Note that \eqref{eq:alpha_bound} is just a sufficient condition to meet the constraint \eqref{eq:K_sparse}. 
It indicates that as long as $\alpha$ is small enough, the feedback $K$ can always meet our requirement, which corresponds to the conclusion in Theorem \ref{th:exist_K}. 
More generally, as shown in the proof of Corollary \ref{coro:double}, any $\alpha$ that satisfies $|(\lambda_i-1)\alpha + \lambda_i|<1 (i=2,\cdots,n) $ will work for the design.

\section{Numerical Simulations}\label{sec:simulation}
 
In this part, we give numerical examples to verify the theoretical results. 
First, we randomly generate the following direct topology matrix for a network of $n=6$ nodes
\begin{small}
\begin{equation*}
W= 
\begin{bmatrix}
\num{0.2559} & \num{0.0648} & \num{0.2437} & \num{0.2308} & \num{0.0618} & \num{0.1430} \\
\num{0.2677} & \num{0.0203} & \num{0.0848} & \num{0.2107} & \num{0.2532} & \num{0.1633} \\
\num{0.0416} & \num{0.2094} & \num{0.1535} & \num{0.2865} & \num{0.2366} & \num{0.0724} \\
\num{0.2740} & \num{0.1388} & \num{0.3221} & \num{0.2221} & \num{0.0245} & \num{0.0185} \\
\num{0.0910} & \num{0.2475} & \num{0.2866} & \num{0.1932} & \num{0.1240} & \num{0.0577} \\
\num{0.2022} & \num{0.1368} & \num{0.0539} & \num{0.2817} & \num{0.2879} & \num{0.0375}
\end{bmatrix}.
\end{equation*}
\end{small}
\!\!It can be verified that $W$ satisfies Assumption \ref{assu:topo}, and the second largest eigenvalue modulus is given by $r_{\max}=0.147$. 
Then, for the homogeneous equation $\tilde{M}  \tilde{\theta}_K=0$ corresponding to this case, we have $\tilde{\theta}_K\in\mathbb{R}^{36}$ and $\operatorname{rank}(\tilde{M})=11$. 
Hence, there are infinite solutions for $K$ that can make $\hat{W}$ inaccurate, e.g., by the procedures in Remark \ref{rema:solution}. 
Next, we turn to examine the distributed design \eqref{eq:local_K} for $K$ to achieve inaccurate inference. 
Here, the critical value for $\alpha$ is given by $({1 - r_{\max}})/({1 + r_{\max}})=0.7437$ based on the condition \eqref{eq:alpha_bound}. 
We consider 5 groups of $\alpha$, $\{0.14,0.28,0.42,0.56,0.7\}$, to construct $K$. 
The following metrics are used to evaluate the state error and inference error under $\hat{W}$, respectively 
\begin{align}
E_1(t)=\|x_t-x^*\|_2, ~E_2(\hat{W})=\|\hat{W}-W\|_2. 
\end{align}
Meanwhile, we consider the popular \emph{Jaccard index} to quantify the change in the sparsity of $\tilde{W}$, 
where a higher Jaccard index (closer to 1) indicates that $\tilde{W}$ is more similar to $W$ in sparsity.  



\begin{figure}[t]
\centering
\includegraphics[width=0.33\textwidth]{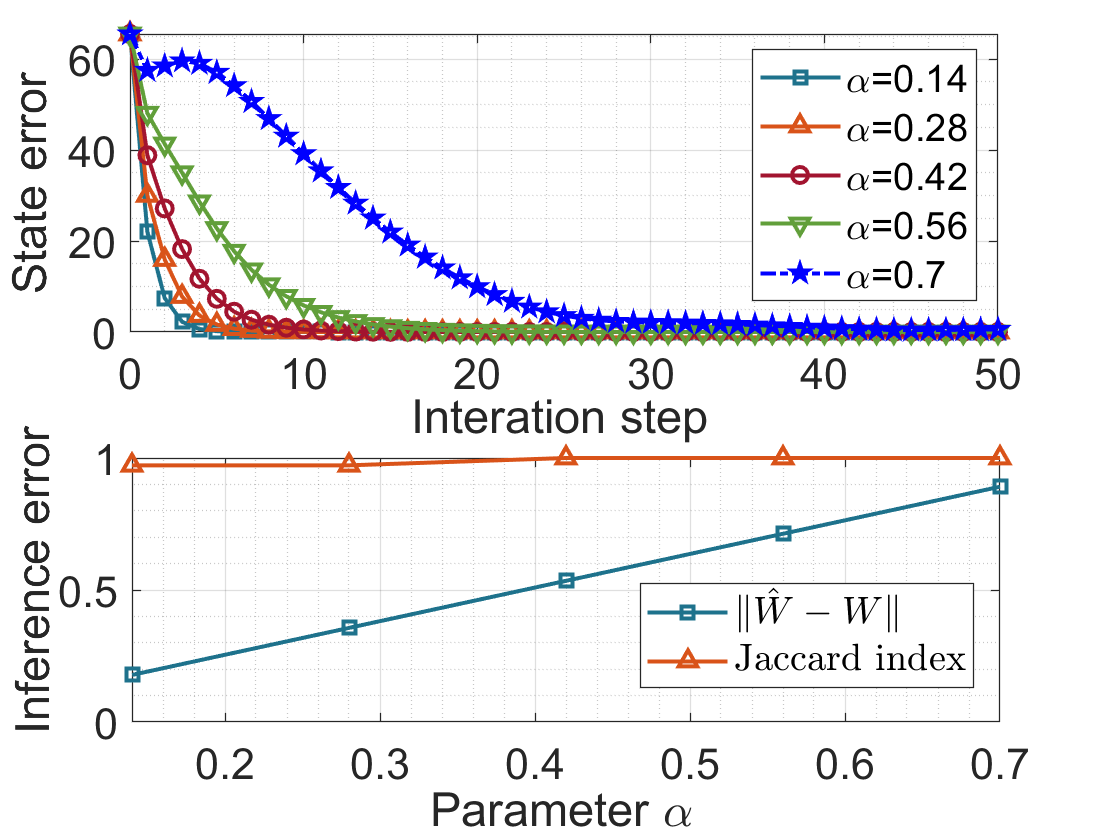}
\vspace{-8pt}
\caption{State convergence error and inference performance under $\tilde{W}$}
\label{fig:all_error}
\vspace{-10pt}
\end{figure}

The simulation results are plotted in Fig.~\ref{fig:all_error}. 
In the upper subfig, the state convergence error $E_1(t)$ in $T=50$ steps is presented. 
It is clear that in all settings of $\alpha$, the errors will approach zero asymptotically. 
As $\alpha$ increases, $\lambda_2(\tilde{W})$ becomes larger, thus incurring a slower convergence speed. 
The bottom subfigure depicts the inference performance of $\hat{W}$. 
It is evident that the inference error $E_2(\hat{W})$ will monotonically increase as $\alpha$ grows. 
This matches our intuition that $\alpha$ directly determines the inference bias. 
For the metric $\operatorname{Jaccard}(\mathcal{Z}_W, \mathcal{Z}_{\tilde{W}})$ where we set $\varepsilon_0=0.2$, the curve is very close or equals to 1. 
For the non-one points in this curve, they occur because some diagonal elements $\tilde{W}_{ii}=(1+\alpha)W_{ii}-\alpha$ are extremely small and not counted, while the sparsity of non-diagonal elements is still unchanged. 
Hence, the sparsity constraint \eqref{eq:K_sparse} is met.

\section{Conclusions}\label{sec:conclusion}

In this paper, we have investigated the problem of feedback control designs that resist topology inference in consensus networks without disrupting consensus convergence. 
First, we introduced the notion of ``accurate inference'', which integrates both the uniqueness (solvability) and the deviation (accuracy) of topology inference. 
Then, through invariant subspace analysis, we established the conditions under which the feedback designs can break the uniqueness of the topology estimator. 
In addition, we demonstrated how to induce deliberate inaccuracies in the inferred topology even when solvability persists. 
A Laplacian structure-based feedback design was developed to meet the goal in a distributed manner. 
Numerical simulations further validated the effectiveness of our methods. 
Future work may consider the distributed design for the unsolvability and extend the analysis to more general network dynamics.

\bibliographystyle{IEEEtran}

\end{document}